\def\vc{\text{VC}}
\def\hat{\widehat}
\def\tilde{\widetilde}
\newtheorem{Theorem}{Theorem}
\newtheorem{Corollary}[Theorem]{Corollary}
\begin{document}

\def\spacingset#1{\renewcommand{\baselinestretch}%
{#1}\small\normalsize} \spacingset{1}


\title{\bf  Instrument, Variable and Model Selection with Nonignorable Nonresponse}

\author{Ji Chen$^{1}$ and Jun Shao${^{1,2}}$\\
$^{1}$School of Statistic, East China Normal University\\
$^{2}$Department of Statistics, University of Wisconsin-Madison}
\date{}

\maketitle

\spacingset{1.5}

\begin{center}
	{\bf Abstract}
\end{center}

With nonignorable nonresponse, an effective method to construct valid estimators of population parameters is to use a covariate vector called instrument that can be excluded from the nonresponse propensity but are still useful covariate even when other covariates are conditioned. The existing work in this approach assumes such an instrument is given, which is frequently not the case in applications.
In this paper we investigate how to search for an instrument from a given set of covariates. The method for estimation we apply is the pseudo likelihood proposed by
\citet{tang2003analysis} and \citet{zhao2015semiparametric}, which assumed that
an instrument is given and the distribution of response given covariates is parametric and the propensity is nonparametric. Thus, in addition to the challenge of searching an instrument,  we also need to do variable and model selection simultaneously.
We propose a method for instrument, variable, and model selection and show that our method produces consistent instrument and model selection as the sample size tends to infinity, under some regularity conditions.
Empirical results including two simulation studies and two real examples are present to show that the proposed method works well.
\vspace{2mm}

\noindent%
{\it Keywords:} Nonresponse instrument; Model slection;  Nonignorable nonresponse; Pseudolikelihood; Variable selection.

\newpage

\section{Introduction\label{sec:intro}}
Nonresponse with an appreciable rate is common in many applications such as clinical trials and sample surveys.
Let $Y$ be a response or outcome of interest that may have nonresponse,  $X$ be a $p$-dimensional covariate vector that is always observed, and $R$ be the  indicator equaling  1 if $Y$ is observed and 0 if $Y$ is missing. When the propensity $P(R=1|Y,X)$ is equal to $P(R=1|X)$ not depending on $Y$, the nonresponse is called ignorable
and there is a  rich literature on methodology of handling nonresponse \citep{little2002statistical}. In many applications,  $P(R=1|Y,X)$ depends on both $X$ and $Y$,  in which cases nonresponse is referred to as nonignorable and  estimation of population parameters is much more challenging than that in the case of ignorable nonresponse.

Throughout we use $p( \cdot  | \cdot )$ or $p( \cdot )$ as a generic notation for the conditional or unconditional probability density with respect to an appropriate measure (discrete, continuous, or mixed). With nonignorable nonresponse, when both $p(Y|X)$ and $P(R=1|Y,X)$ are parametric, maximum likelihood methods have been developed \citep{greenlees1982imputation,baker1988regression}. When both $p(Y|X)$ and $P(R=1|Y,X)$ are nonparametric, \citet{robins1997toward} showed that the population may not be  identifiable. Hence, efforts have been made to develop semiparametric methods, assuming one of $p(Y|X)$ and $P(R=1|Y,X)$ has a parametric form and the other one is nonparametric.
\citet{qin2002estimation} and \citet{wang2014instrumental} imposed a parametric model on $P(R=1|Y,X)$ but allowed $p(Y|X)$ to be nonparametric. Following \citet{tang2003analysis},  \citet{zhao2015semiparametric}, and \citet{chen2018pseudo}, in this paper we focus on
 a nonparametric $P(R=1|Y,X)$ and a parametric model
 \begin{equation}
 p(Y|X)=p(Y|X;\theta),\label{para}
 \end{equation}
 where  $\theta$ is an unknown parameter vector and $p(Y|X;\theta)$ is known when $\theta$ is known.

In their semiparametric approach, \cite{zhao2015semiparametric} utilized a covariate vector $Z$ called nonresponse instrument or simply instrument,
to guarantee the identifiability of population parameters so that consistent estimators can be obtained. More precisely, an instrument $Z$ is a sub-vector of $X$, i.e.,  $X=(U,Z)$, such that $Z$ satisfies the following two conditions:
\begin{align}
&P(R=1 |Y,X)=P(R=1|Y,U), \label{eq:propensity}\\
&p(Y|X)=p(Y|U,Z) \text{ depends on } Z. \label{eq:propensity1}
\end{align}
If we know which components of $X$ satisfy (\ref{eq:propensity})-(\ref{eq:propensity1}), then parameters in $p(Y|X)$ can be estimated using pseudo likelihoods \citep{zhao2015semiparametric} and parametric model selection regarding (\ref{para}) can also be performed \citep{Fang2016}.
Note that even if both $p(Y|X)$ and $P(R=1|Y,X)$ are parametric, there is still an identifiability issue and the use of an instrument satisfying (\ref{eq:propensity})-(\ref{eq:propensity1}) may be needed.

In most applications, however, an instrument is not given and we must search for an instrument using observed data. \citet{chen2021instrument} conducted an initial study on instrument selection using a pseudo-likelihood approach. The problem becomes even more challenging when we also need to remove redundant components of  $X$ as well as to select models  for the parametric assumption in (\ref{para}).

The purpose of this paper is to propose and study a method for instrument search, dimension reduction, and parametric model selection simultaneously. After an introduction of a pseudo likelihood method in Section 2, we propose a pseudo likelihood based maximum ratio criterion to select an instrument, assuming that we do not need to do dimension reduction or model selection.
In  Section 3 we perform variable selection to eliminate redundant components of  the instrument selected in Section 2 and also redundant components in the entire covariate vector $X$. Section 4 is devoted to parametric model selection regarding (\ref{para}).
To complement theoretical properties of the proposed method derived in Sections 2-4, we carry out some simulations in Section 5 to examine finite sample properties. For illustration, we apply the proposed method to two data sets in Section 6. All technical details are given in an Appendix.

\section{Instrument Selection\label{sec:ins}}

We use the notation from Section \ref{sec:intro}: $Y$ is a response subject to nonresponse, $R$ is its observation indicator, and $X$ is a fully observed $p$-dimensional covariate vector. We assume the parametric model~\eqref{para} for $p(Y|X)$ is correct. Our goal is to estimate $\theta$ from a random sample $\{(y_i, x_i, r_i): i=1, \dots, N\}$.

We assume there exists an instrument $Z$, a sub-vector of $X$, satisfying conditions~\eqref{eq:propensity} and~\eqref{eq:propensity1}. If $Z$ were known, $\theta$ could be estimated by maximizing the pseudo-likelihood~\citep{tang2003analysis, zhao2015semiparametric}:
\begin{equation}
\prod_{i: r_i=1} p(z_i | y_i, u_i) = \prod_{i: r_i=1} \frac{p(y_i | u_i, z_i; \theta) p(u_i | z_i; \hat{\eta}) d\hat{F}(z_i)}{\int p(y_i | u_i, z; \theta) p(u_i | z; \hat{\eta}) d\hat{F}(z)} \label{likelihood}
\end{equation}
where $\hat{F}$ is the empirical CDF of $Z$, and $\hat{\eta}$ is an estimator for the parameters of $p(U|Z)$.

Since $Z$ is typically unknown, we propose a procedure to select it from $X$. The core idea, detailed in \citet{chen2021instrument}, is to compare two estimators of the conditional distribution $F_1(z) = P(Z \leq z | R=1)$ for a candidate $Z$:
\begin{itemize}
    \item The empirical CDF $\hat{F}_1(z)$, which is always consistent.
    \item A model-based estimator $\tilde{F}_1(z)$ derived from the pseudo-likelihood~\eqref{likelihood}.
\end{itemize}
If $Z$ is a valid instrument, these two estimators should be asymptotically close. We therefore define a validation criterion (VC) for a candidate instrument $Z_k$ as:
\begin{equation}
\text{VC}(k) = \frac{1}{N} \sum_{i=1}^{N} \left| \tilde{F}_{1k}(z_{ki}) - \hat{F}_{1}(z_{ki}) \right|. \label{eq:VC}
\end{equation}
A small $\text{VC}(k)$ indicates $Z_k$ likely satisfies the key exclusion condition~\eqref{eq:propensity}. However, $\text{VC}(k)$ cannot distinguish instruments that satisfy~\eqref{eq:propensity} but not~\eqref{eq:propensity1} (i.e., are redundant).

To overcome this, we first identify the set $S = \{k: Z_k \text{ satisfies~\eqref{eq:propensity}} \}$. The union $Z_S = \{Z_k: k \in S\}$ is then a valid instrument, as it satisfies~\eqref{eq:propensity} and must contain at least one component satisfying~\eqref{eq:propensity1}.

Let $\{l_1, \ldots, l_p\}$ be a permutation of $\{1, \ldots, p\}$ such that $\text{VC}(l_1) \leq \cdots \leq \text{VC}(l_p)$. We estimate the set $S$ by $\hat{S} = \{l_1, \ldots, l_{\hat{d}}\}$, where
\begin{equation}
\hat{d} = 
\begin{cases}
\arg\min_{1 \leq j \leq p-1} \text{VC}(l_j) / \text{VC}(l_{j+1}) & \text{if } \frac{\text{VC}(l_p) - \text{VC}(l_1)}{\text{VC}(l_1)} > (\log N)^{1/2} \\
p & \text{otherwise.}
\end{cases} \label{eq:d}
\end{equation}
This thresholding rule is motivated by the asymptotic behavior of the VC statistic \citep{huang2014feature}. As shown in \citet{chen2021instrument}, under standard regularity conditions, $P(\hat{S} = S) \to 1$ as $N \to \infty$.

Finally, as a single binary covariate often cannot serve as a valid instrument alone \citep{zhao2015semiparametric}, we pre-process the covariates: any binary component is combined with another covariate to form a vector candidate before applying the selection procedure~\eqref{eq:d}.

The selected instrument is $\hat{Z} = Z_{\hat{S}}$, and the remaining covariates form $\hat{U}$. The parameter $\theta$ can then be estimated by maximizing~\eqref{likelihood} using $\hat{Z}$ and $\hat{U}$.

\section{Variable and Instrument Selection\label{sec:var}}

It is shown in Section \ref{sec:ins} that, with probability tending to 1, we can find a vector $Z_S$ satisfying \eqref{eq:propensity}-\eqref{eq:propensity1} and having  the largest possible dimension. However, we need to further address the following two issues.
\vspace{-1mm}
\begin{itemize}
	\item[(a)] Although $Z_S$ defined in Section 2 satisfies \eqref{eq:propensity} and \eqref{eq:propensity1},
	some components of $Z_S$ may not satisfy \eqref{eq:propensity1},
	i.e., they are redundant and lead to less efficient estimators of $\theta$. \vspace{-3mm}
	\item[(b)]  	Some components of $X$ may be redundant and should be excluded from $p(Y|X)$.
\end{itemize}
\vspace{-1mm}

Assuming that model (\ref{para}) is correct, in this section we address (a)-(b),
i.e., we try to eliminate redundant components in $Z_S$ and  $p(Y|X)$.
 In what follows, for two vectors $A$ and $B$, $A \subset B$ means that $A$ is a sub-vector of $B$,
$A \cup B$ denotes the vector with components in either $A$ or $B$,
$A \cap B$ denotes the vector with components in both $A$ and $B$,
and
$A \cap B = \varnothing$ means that $A$ and $B$ have no common component.

Let $X_*\subset X$ be the sub-vector with the smallest dimension such that \begin{equation}
\label{bestcov}
p(Y|X)=p(Y|X_*),
\end{equation} i.e., components of $X$ not in $X_*$ are redundant in terms of predicting $Y$. Also, let
\begin{equation}
Z_S=(Z_*,V), \quad Z_*\subset X_*, \ V \cap X_* = \varnothing , \label{bestins}
\end{equation}
i.e.,  each component of $Z_*$ satisfies both \eqref{eq:propensity}-\eqref{eq:propensity1}, and  $V$ does not satisfy \eqref{eq:propensity1} and is a totally redundant covariate.
To study (a)-(b), we develop a selection procedure to split $Z_*$ from $Z_S$ and eliminate some components of $X$ not in $X_*$ simultaneously.



Since $
P(R=1|X)=\int p(y|X) P(R=1|y,U)dy$,
we obtain that
\begin{equation}\label{eq:y given xr}
p(Y|X,R=1)=\dfrac{p(Y|X)P(R=1|Y,U)}{\int p(y|X) P(R=1|y,U)dy},
\end{equation}
where $U$, given in
\eqref{eq:propensity}, contains all covariates not in $Z_S $.
If $X_A \subset X$ with the smallest dimension such that
\begin{equation}
Y \perp X \mid X_A, R=1 ,
\label{eq:1func}
\end{equation}
i.e., $Y$ and $X$ are independent conditioned on $X_A$ and $R=1$,
then it follows from (\ref{eq:y given xr}) that $X_A \subset X_* \cup U$.
However, if $p(Y|X)P(R=1|Y,U) = g(Y,X) h(X)$, where $g$ and $h$ are functions and $h(X)$ does not involve $Y$, then $h(X)$ is canceled from the ratio on right-hand side of (\ref{eq:y given xr}) so that $X_A$ misses $h(X)$ that could be part of $X_*$. To address this we can complement $X_A$ by finding $X_B$ with the smallest dimension such that
\begin{equation}\label{eq:2func}
R \perp X \mid X_B
\end{equation}
Since $
P(R=1|X)=\int p(y|X) P(R=1|y,U)dy = \int g(y, X)h(X)dy $,  the part $h(X)$ will not be missed if it exists.

From the previous discussion, it can be shown that
$Z_*=X_A\cup X_B \cap Z_S$ (see the proof of Theorem \ref{lem2} in the Appendix), i.e., the instrument vector
without any redundant component 	
	is a sub-vector of $Z_S$ containing components in either $X_A$ or $X_B$.

The next result can be used to simplify the two selection procedures (\ref{eq:1func})-(\ref{eq:2func}) into one selection procedure.


\begin{Theorem}\label{lem2}
	Let $X_A$ and $X_B$ be defined in   \eqref{eq:1func} and \eqref{eq:2func}, respectively, and $X_C$ be the smallest dimension covariate satisfying
	  \begin{equation}\label{eq:subc}
R \, e^Y \perp X \mid  X_C .
\end{equation}
Then   $X_C=X_A\cup X_B$ and $Z_* = X_C \cap Z_S$.
\end{Theorem}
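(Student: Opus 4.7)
The plan is to reduce condition \eqref{eq:subc} to the pair of conditions \eqref{eq:1func} and \eqref{eq:2func}, then deduce $X_C = X_A \cup X_B$ by a double inclusion, and finally use \eqref{eq:y given xr} to compare $X_C$ with $Z_S$. The key reduction I would establish first is the equivalence: for any sub-vector $W \subseteq X$, $R e^Y \perp X \mid W$ holds if and only if both $R \perp X \mid W$ and $Y \perp X \mid W, R=1$. The reason is that the distribution of $R e^Y$ given $X$ has a point mass at $0$ of size $P(R=0 \mid X)$ together with, on $(0,\infty)$, the image under $y \mapsto e^y$ of $P(R=1 \mid X) p(Y \mid X, R=1)$; independence from $X$ given $W$ forces each piece to depend on $X$ only through $W$, and the converse is immediate.

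With this equivalence in hand, $X_C = X_A \cup X_B$ follows by double inclusion. Since $X_C$ satisfies \eqref{eq:subc}, it satisfies both \eqref{eq:1func} and \eqref{eq:2func}, so minimality of $X_A$ and $X_B$ yields $X_A, X_B \subseteq X_C$, hence $X_A \cup X_B \subseteq X_C$. Conversely, because $p(Y \mid X, R=1) = p(Y \mid X_A, R=1)$ depends on $X$ only through $X_A \subseteq X_A \cup X_B$, a tower-property calculation gives $p(Y \mid X_A \cup X_B, R=1) = p(Y \mid X_A, R=1) = p(Y \mid X, R=1)$, so $Y \perp X \mid X_A \cup X_B, R=1$, and analogously $R \perp X \mid X_A \cup X_B$. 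The equivalence then shows that $X_A \cup X_B$ satisfies \eqref{eq:subc}, and minimality of $X_C$ forces $X_C \subseteq X_A \cup X_B$.

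For $Z_* = X_C \cap Z_S$, the easy inclusion uses \eqref{bestcov} and \eqref{eq:y given xr}: substituting $p(Y \mid X) = p(Y \mid X_*)$ shows that $p(Y \mid X, R=1)$ and $P(R=1 \mid X) = \int p(y \mid X_*) P(R=1 \mid y, U)\, dy$ both depend on $X$ only through $X_* \cup U$. Hence $X_A, X_B \subseteq X_* \cup U$, so $X_C \subseteq X_* \cup U$; intersecting with $Z_S = X \setminus U$ and using the decomposition $Z_S = Z_* \cup V$ with $V \cap X_* = \varnothing$ yields $X_C \cap Z_S \subseteq X_* \cap Z_S = Z_*$.

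The main obstacle is the reverse inclusion $Z_* \subseteq X_C$, which I would prove by contradiction. Suppose $Z_j \in Z_*$ but $Z_j \notin X_A \cup X_B$; then neither $p(Y \mid X, R=1)$ nor $P(R=1 \mid X)$ varies with $Z_j$. Multiplying these two factors via \eqref{eq:y given xr} gives $p(Y \mid X) P(R=1 \mid Y, U) = p(Y \mid X, R=1)\, P(R=1 \mid X)$, so the product is constant in $Z_j$. Because $Z_j \in Z_S$ is disjoint from $U$, the factor $P(R=1 \mid Y, U)$ is also constant in $Z_j$, and dividing (on the support where $P(R=1 \mid Y, U) > 0$) forces $p(Y \mid X) = p(Y \mid X_*)$ to be constant in $Z_j$. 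This contradicts the minimality of $X_*$ in \eqref{bestcov}, since $Z_j \in Z_* \subseteq X_*$. Together with $Z_* \subseteq Z_S$, this delivers $Z_* \subseteq X_C \cap Z_S$ and closes the proof.
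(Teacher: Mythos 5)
Your proof is correct and follows essentially the same route as the paper's: you use the same decomposition of the law of $R\,e^Y$ given $X$ (atom at $0$ plus the positive part) to show \eqref{eq:subc} is equivalent to the conjunction of \eqref{eq:1func} and \eqref{eq:2func}, and the same factorization identities \eqref{eq:y given xr} and \eqref{eq:yx} together with the disjointness of $U$ and $Z_S$ to identify $Z_*$ with $(X_A\cup X_B)\cap Z_S$. The only difference is presentational — you argue by double inclusion on $Z_*$ versus $X_C\cap Z_S$, while the paper works with the complement sets $V_A,V_B$ and shows $V=V_A\cap V_B\cap Z_S$ — and both arguments share the same implicit assumption that the minimal conditioning sets are unique and contained in any other set satisfying the respective conditional independence.
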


The transformation $e^Y$ is applied to avoid the confound of $Y=0$ with $R=0$.
Note that
the search for $X_C$ in (\ref{eq:subc}) can be carried out using data from $(R \, e^Y, X)$,
since $R \, e^Y =0$ when $R=0$ regardless of whether $Y$ is observed or not.
Thus, to implement variable selection with  (\ref{eq:subc}),
 we can apply an existing variable selection method using appropriate data.
Although $p(Y|X)$ is assumed to be parametric under (\ref{para}), the propensity is nonparametric and thus the distribution of $R \, e^Y$ conditioned on $X$ is nonparametric. Hence, we need to use a nonparametric variable selection method. Further, we would like to use a variable selection method that is selection consistent.
Let   $\hat X_C$ be the selected vector of covariates as an estimated  $X_C$. A variable selection method is selection consistent if
  $P(\hat X_C=X_C)\to 1$ as the sample size $N\to \infty$.
Variable selection  methods satisfy our needs include those in  \citet{bondell2009shrinkage}, \citet{huang2014feature}, and \citet{yu2016marginal}.
We propose to apply the shrinkage inverse regression method derived by \citet{bondell2009shrinkage}. The key idea of shrinkage inverse regression method is to employ Lasso shrinkage within the context of sufficient dimension reduction to achieve simultaneous dimension reduction and variable selection. Details of  the procedure and theoretical properties can be found in  \citet{ni2005note} and \citet{bondell2009shrinkage}.

 Let $\hat Z_S$ be obtained in Section 2 and $\hat U$ be the components of $X$ not in $\hat Z_S$.
 Then, we estimate $Z_*$  by
$\hat Z_*=\hat X_C\cap \hat Z_S$, and define
$\hat X_S =(\hat U, \hat Z_*)$.  The following result shows  asymptotic properties of $\hat Z_*$ and $\hat X_S$.

\begin{Theorem}\label{thm:all}
	Assume the conditions in Theorem 1 in \citet{chen2021instrument} and that
	$P(\hat X_C = X_C) \to 1$ as $N \to \infty$. \vspace{-2mm} Then,  as $N\to \infty$,\vspace{-2mm}
	\begin{enumerate}[(i)]
		\item  $P(\hat Z_*=Z_*)\to 1$; \vspace{-2mm}
		\item $P\{\hat X_S = (U, Z_*)\}\to 1$ and
		if, in addition, $U\subset X_*$, then $P(\hat X_S=X_*) \to 1$.
	\end{enumerate}
\end{Theorem}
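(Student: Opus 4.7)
The plan is to reduce everything to two convergence-in-probability facts that are already available: (1) the instrument-selection consistency from Section 2, namely $P(\hat Z_S = Z_S) \to 1$ guaranteed by Theorem~1 of \citet{chen2021instrument} under the stated conditions, and (2) the variable-selection consistency $P(\hat X_C = X_C) \to 1$ assumed in the hypothesis. Once I work on the intersection event $E_N = \{\hat Z_S = Z_S\} \cap \{\hat X_C = X_C\}$, which satisfies $P(E_N) \to 1$ by a union bound, everything becomes a deterministic set identity and I only need to match definitions.

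For part (i), on the event $E_N$ the definition $\hat Z_* = \hat X_C \cap \hat Z_S$ becomes $\hat Z_* = X_C \cap Z_S$, and by Theorem~\ref{lem2} the right-hand side equals $Z_*$. Hence $E_N \subset \{\hat Z_* = Z_*\}$, which gives $P(\hat Z_* = Z_*) \to 1$. For the first half of part (ii), recall $\hat U$ is the complement of $\hat Z_S$ in $X$ and $U$ is the complement of $Z_S$ in $X$; so on $\{\hat Z_S = Z_S\}$ we automatically have $\hat U = U$. Combined with (i), on $E_N$ we get $\hat X_S = (\hat U, \hat Z_*) = (U, Z_*)$, and again $P(\hat X_S = (U, Z_*)) \to 1$ follows.

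The only step requiring a small set-theoretic argument is the second half of part (ii): showing that the additional assumption $U \subset X_*$ upgrades the limit to $\hat X_S = X_*$. From \eqref{bestins} I have the decomposition $X = (U, Z_S) = (U, Z_*, V)$ with $V \cap X_* = \varnothing$ and $Z_* \subset X_*$, so the only components of $X_*$ lying in $Z_S$ are exactly those in $Z_*$, giving $X_* = (X_* \cap U, Z_*)$. When $U \subset X_*$ I obtain $X_* \cap U = U$, so $X_* = (U, Z_*)$, and combining with the first half of (ii) yields $P(\hat X_S = X_*) \to 1$.

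The argument is therefore essentially a bookkeeping exercise, with all heavy lifting absorbed into (a) the cited Theorem~1 of \citet{chen2021instrument}, (b) the hypothesis on $\hat X_C$, and (c) Theorem~\ref{lem2} of this paper. The only place where a proof could go wrong is the set identity $X_* = (X_* \cap U, Z_*)$, so the main obstacle — if it can be called one — is to justify carefully that no component of $X_*$ sits inside $V$. That, however, is immediate from the defining property $V \cap X_* = \varnothing$ in \eqref{bestins}, so the proof closes cleanly in a few lines.
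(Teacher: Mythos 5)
Your proof is correct and follows the same route as the paper, whose own proof is a one-line statement that the result ``follows directly'' from Theorem~1 of \citet{chen2021instrument} and Theorem~\ref{lem2}; you have simply made explicit the intersection-event argument and the set identity $X_*=(X_*\cap U, Z_*)$ that the paper leaves implicit (cf.\ its display~(\ref{xstar})). No gaps.
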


Result (i) in Theorem \ref{thm:all} shows that $\hat Z_*$ is consistent, under the assumed conditions. Thus, issue (a) raised in the beginning of this section is well addressed. To address issue (b),
write $U=(U_A,U_B)$ with $U_A = X_*\cap U$  and $U_B \cap X_* = \varnothing$.
From the definitions,
\begin{equation}
	X_*=  (U_A ,Z_*)
\subset  (U,Z_*) =X_C \cup U . \label{xstar}
\end{equation}
Result (ii) in Theorem \ref{thm:all}  says that $\hat X_S$ is
 consistent in finding $(U,Z_*)$, but is asymptotically conservative in finding $X_*$
that is  strictly contained in $(U,Z_*)$ when  $U_B \neq \varnothing$. If
 $U_B = \varnothing$ ($U \subset X_*$), then $X_* = (U,Z_*)$ and $P(\hat X_S=X_*) \to 1$.
 In the presence of nonignorable nonresponse, we need the entire $U$ for a valid  inference, although part of $U$, $U_B$, is not in $X_*$.
 This can be viewed as  the price we pay for inference with nonignorable nonresponse.

A consistent selection of $U_A$ will result in a consistent selection of $X_*$, which
unfortunately cannot be done without any further condition on the propensity $P(R=1|Y,U)$ (e.g., finding $X_*$ can be done when nonresponse is ignorable), or on $p(Y|X)$. Note that the result in this section does not rely on the parametric assumption (\ref{para}). In the next section we show that, under some parametric model assumption on $p(Y|X)$, we can consistently find $X_*$.

\section{Parametric Model Selection}

The parametric model assumption (\ref{para}) is assumed for a valid inference with nonignorable nonresponse because \cite{robins1997toward} showed that
$p(Y|X)$ and $P(R=1|Y,X)$ cannot be both nonparametric
(assuming parametric $P(R=1|Y,X)$ is a different line of research as we stated in Section 1).
In applications, the parametric form in (\ref{para}) may have to be selected from a group of candidate models. For example, we may have a logistic model and a probit model for a binary $Y$, or we may take or not take logarithm for some continuous covariates and/or $Y$, or we may consider only linear functions of $X$ or both linear and quadratic functions of $X$.
 Let  $\mathcal{M}=\{ M_j :p_j(Y|X;\theta^{(j)}), j=1,\dots,J\}$ be a
 postulated $J$ candidate parametric models
 for (\ref{para}), where $\theta^{(j)}$ is an unknown parameter vector under model $M_j$. Based on observed data, we would like to select a model from $\mathcal{M}$, assuming at least one model in $\mathcal{M}$ satisfies (\ref{para}) instead of just a single model for (\ref{para}).

Parametric model selection regarding (\ref{para}) was studied in
 \citet{Fang2016}, but their method requires that a  known  instrument $Z$ be provided. Furthermore, \citet{Fang2016} proposed to consider all sub-vectors of $X$ which may be computationally infeasible.
	
Because of the variable selection procedure in Section 3, we can assume that each $M_j$ in $\mathcal{M}$ contains all components of $X$. For example, if $X=(X_1,X_2)$ and $Y|X \sim N(\beta_0 + \beta_1 X_1+ \beta_2 X_2, \sigma^2)$ is in $\mathcal{M}$, then  $Y|X \sim N(\beta_0 + \beta_1 X_1, \sigma^2)$ should not be in $\mathcal{M}$, but $\mathcal{M}$ may contain $Y|X \sim N(\beta_0 + \beta_1 X_1+ \beta_2 X_2+ \gamma_1 X_1^2 + \gamma_2 X_2^2, \sigma^2)$ or
$\log Y | X \sim N(\beta_0 + \beta_1 X_1+ \beta_2 X_2, \sigma^2)$.

Let $\mathcal{M}= \mathcal{M}_C \cup \mathcal{M}_W$, where  $\mathcal{M}_C$ contains all correct models and $\mathcal{M}_W$ contains all wrong models.
Note that models in $\mathcal{M}_C$ are nested. A correct model may contain redundant covariates or redundant components such as an unnecessary high order term of a useful covariate. Our goal is to find the most compact correct model, which is called the best model.

Using the methods proposed and studied in Sections 2-3 as steps 1-2, we now add two more steps, one for consistent model selection to decide which $M_j$ should be adopted, and one for consistently finding $X_*$, which is not resolved as we discussed in the end of Section 3.
The following is a summary of the entire 4-step procedure, including some discussions and explanations at each step.
\begin{enumerate}[Step 1.]
		\item First, we apply the validation criterion proposed in Section \ref{sec:ins} on each model $M_j\in  \mathcal{M}$ to select a sub-vector of $X$ satisfying \eqref{eq:propensity}. Denote this sub-vector of $X$ as $\hat Z_S^{(j)}$, $j=1,...,J$. If $M_j \in \mathcal{M}_C$, then the theoretical result in Section 2 ensures that,
	with probability tending to 1 as $N \to \infty$, $\hat Z_S^{(j)} = Z_S = (Z_*,V)$,
		 where each component of $Z_*$ satisfies both \eqref{eq:propensity}-\eqref{eq:propensity1} and  $V$ satisfies \eqref{eq:propensity} but not \eqref{eq:propensity1}, i.e.,  $V$ is a totally redundant covariate vector.
		\vspace{-2mm}
		\item Second, we apply the nonparametric variable selection method proposed in Section \ref{sec:var} to find $\hat X_C$. Note that this part does not depend on model $M_j$.
		Let $\hat Z_*^{(j)}=\hat X_C\cap \hat Z_S^{(j)}$, $j=1,...,J$. The theoretical result in Section 3 shows that, with probability tending to 1 as $N \to \infty$, $\hat Z_*^{(j)} = Z_*$
		if $M_j \in \mathcal{M}_C$.
		\vspace{-2mm}
		\item
		Third, we carry out model selection with models in $\mathcal{M}$.
		Let $\hat U^{(j)}$ be the vector containing components of $X$ that are not in $\hat Z_S^{(j)}$, $j=1,...,J$. Since in step 2 we reduce $X$ to $\hat X_S^{(j)} = \hat X_C \cup \hat U^{(j)}$ under model $M_j$, we can reduce model $M_j$ to
		$\tilde M_j : p_j(Y| \hat X_S^{(j)}; \theta^{(j)})$.
	We proposed to apply the penalized validation criterion  (PVC)  in \citet{Fang2016} to select the most compact model from  	
	$\{ \tilde{M}_j:  j=1,...,J \}$, except that we use $\hat Z_*^{(j)}$ selected in Step 2 as the  instrument under $j$th model instead of a fixed given instrument for all models in
	\citet{Fang2016}.
Theoretically, we can establish a result similar to that in \citet{Fang2016}, i.e.,
with probability tending to 1 as $N \to \infty$, we will select the most compact correct model in 	\vspace{-2mm} $\mathcal{M}$.
\item
Finally, we would like to finish with finding the most compact covariate $X_*$ defined by (\ref{bestcov}). This not only completes the variable selection started at Step 2, but also leads to the most compact model for the final estimation of the parameter $\theta$ using the pseudo likelihood 	in (\ref{likelihood}). Suppose that the model selected in Step 3 is the one with $j=j_*$. Then $\hat Z_* = \hat Z_*^{(j_*)}$ is the selected instrument vector,
$\hat U_* = \hat U^{(j_*)}$ is the selected covariate vector in the propensity, and
$\hat X_S^{(j_*)} = \hat X_C \cup \hat U_*$ is the selected covariate vector in the selected model $j_*$.
According to (\ref{xstar}), we just need to split   $\hat U_*$ into $\hat U_A$ and $\hat U_B$, where $\hat U_A$ is included in $p(Y|X)$ and $\hat U_B$ is excluded.
To achieve this, we apply the PVC in \citet{Fang2016} again to select  a
model from $\{ p_t (Y|\hat X_C \cup \hat U_t ; \theta^{(t)} ): \hat U_t \mbox{ is a sub-vector of }\hat U_*\}$, using $\hat Z_*$ as the instrument. If $\hat U_{t_*}$ is selected, then our selected $X_*$ in (\ref{bestcov}) is $\hat X_* = (\hat U_{t_*} , \hat Z_*)$. The final model selected is $\hat M_* : p_{t_*}(Y|\hat U_{t_*}, \hat Z_* ; \theta^{(t_*)} )$.

\end{enumerate}	

 Once the instrument $\hat Z_*$, covariate vectors $\hat U_*$ and $\hat X_*$, and model $\hat M_*$ are selected, we can maximize the pseudo likelihood in (\ref{likelihood}) to estimate the corresponding parameter $\theta$ in  model $\hat M_*$ and other parameters such as the mean of $Y$. The following result summarizes asymptotic properties of estimators under our approach.

 \begin{Corollary}\label{thm:last}
 	Assume the conditions in Theorem \ref{thm:all} and condition C8 in
 	\citet{Fang2016}.  \vspace{-2mm} Then,  as $N\to \infty$,
 	\begin{enumerate}[(i)]
 		\item $P(\hat X_* =X_*)\to 1$; \vspace{-2mm}
 		\item $P(\hat M_* = \mbox{the most compact correct model }) \to 1$; \vspace{-2mm}
 		\item  the estimator of $\theta$ obtained by maximizing the pseudo likelihood in (\ref{likelihood}) with selected $\hat Z_*$, $\hat U_*$, $\hat X_*$, and  model $\hat M_*$ is consistent and asymptotically normal.
 	\end{enumerate}
 \end{Corollary}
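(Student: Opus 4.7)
The plan is to establish (i) and (ii) by concatenating selection-consistency guarantees for each of the four steps in Section~4, and then to deduce (iii) from a standard oracle argument. Because the candidate set $\mathcal{M}$ is finite and every step involves only finitely many discrete decisions, a union bound lets me combine "Step $k$ succeeds w.p.\ $\to 1$" statements freely. Let $M_*\in\mathcal{M}_C$ denote the most compact correct model; the goal is to show that the event
\begin{equation*}
\mathcal{E}_N = \bigl\{\hat Z_S^{(j)} = Z_S \text{ for every } M_j \in \mathcal{M}_C\bigr\} \cap \bigl\{\hat X_C = X_C\bigr\} \cap \bigl\{\hat M_* = M_*\bigr\} \cap \bigl\{\hat X_* = X_*\bigr\}
\end{equation*}
has probability tending to $1$.

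For (i) and (ii), I proceed step by step. In Step~1, Theorem~1 of \citet{chen2021instrument} applied to each fixed $M_j\in\mathcal{M}_C$ gives $P(\hat Z_S^{(j)}=Z_S)\to 1$; a union bound over the finitely many correct models delivers this uniformly. In Step~2, the hypothesis $P(\hat X_C=X_C)\to 1$ inherited from Theorem~\ref{thm:all}, combined with Theorem~\ref{thm:all}(i) applied to each correct model, yields $\hat Z_*^{(j)}=Z_*$ w.p.\ $\to 1$ for every $M_j\in\mathcal{M}_C$. In Step~3, on the intersection of the previous events, each correct $\tilde M_j$ is evaluated with the true instrument $Z_*$, so the PVC consistency of \citet{Fang2016} under condition C8 transfers to the family $\{\tilde M_j\}$ and selects the most compact correct model, giving an index $j_*$ that corresponds to $M_*$ w.p.\ $\to 1$. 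On this good event $\hat U_*=U$ and $\hat Z_*=Z_*$, so the Step~4 search family $\{p_t(Y\mid X_C\cup U_t;\theta^{(t)}): U_t\subset U\}$ is deterministic, finite, and nested, and its unique most compact correct member has $U_t=U_A$. A second invocation of the PVC consistency of Fang et al.\ under C8 picks that member, giving $\hat U_{t_*}=U_A$ and hence $\hat X_*=(U_A,Z_*)=X_*$ and $\hat M_*=M_*$ w.p.\ $\to 1$. This proves (i) and (ii).

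For (iii), an oracle argument closes the proof. On $\mathcal{E}_N$, the pseudo-likelihood estimator of $\theta$ obtained from \eqref{likelihood} with the selected $\hat Z_*$, $\hat U_*$, $\hat X_*$, and $\hat M_*$ is identical to the oracle pseudo-likelihood estimator computed with the true $Z_*$, $U$, $X_*$, and $M_*$, which is consistent and asymptotically normal by the theory of \citet{zhao2015semiparametric} under the regularity conditions inherited from Theorem~\ref{thm:all}. Since the two estimators agree on $\mathcal{E}_N$ and $P(\mathcal{E}_N)\to 1$, a standard coupling argument shows they share the same limit distribution, yielding (iii).

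The main obstacle, in my view, is not any of Steps~1--3, each of which is close to a direct citation, but the verification that condition C8 of \citet{Fang2016} applies to the data-dependent search family appearing in Step~4, whose covariate structure depends on the outcomes of Steps~1--2. The cleanest route I see is to argue conditionally on $\{\hat X_C=X_C\}\cap\{\hat Z_S^{(j_*)}=Z_S\}$, on which the search family becomes the deterministic nested collection of parametric models indexed by $U_t\subset U$; condition C8 inherited from the full $\mathcal{M}$ then restricts cleanly to this sub-family, and the PVC consistency of Fang et al.\ applies directly.
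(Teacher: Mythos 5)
Your proposal is correct and follows essentially the same route the paper takes: the paper gives no separate written proof of Corollary~\ref{thm:last}, but justifies it exactly as you do, by chaining the selection-consistency guarantees of Steps 1--4 (Theorem~1 of \citet{chen2021instrument}, Theorem~\ref{thm:all}, and two invocations of the PVC consistency of \citet{Fang2016} under condition C8) over the finite candidate set, and then transferring the asymptotics of the oracle pseudo-likelihood estimator on the high-probability event where all selections are correct. Your closing remark about verifying C8 for the data-dependent Step-4 family (and, one could add, about wrong models in Step~3 being paired with data-dependent instruments $\hat Z_*^{(j)}$) identifies the only point the paper itself leaves implicit.
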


\section{Simulation Studies}
In this section, we study the finite-sample performance of the proposed method in terms of the rate we select  instrument $Z_*$, covariate vector $X_*$, and the most compact correct model. All the results are based on 1000 simulation replications and three sample sizes $N=100,\;200,\;500$.
\subsection{Simulation Study 1: Instrument Selection}
In the first simulation study, we consider the situation in which there is no redundant covariate in $p(Y|X)$, i.e., $X=X_*$, and the instrument is unknown. Thus, this simulation evaluates the performance of the procedure in Section 2 for instrument selection only.

Let $X=(X_1,X_2,X_3)$, where $X_j \sim \chi^2(1),\; j=1,2,3$, are independent, $Y|X \sim N(\beta_0+\beta_1 X_1+\beta_2 X_2+\beta_3 X_3, \sigma^2)$, where $\theta=(\beta_0,\beta_1,\beta_2,\beta_3,\sigma^2)=(5,2,2,2,3)$.
 For the nonresponse propensity, we consider the following three cases:
$$
\begin{array}{llll}
\text{Case 1:} & P(R=1|Y,X) \!\! \! & =\left[1+\exp\left\lbrace  -Y+8  \right\rbrace\right]^{-1}, & U= \varnothing , Z=X;\\
\text{Case 2:} & & = \left[1+\exp\left\lbrace  -Y(4X_1-1)  \right\rbrace\right]^{-1},
& U= X_1 , Z=(X_1,X_2);\\
\text{Case 3: }& &=\left[1+\exp\left\lbrace  -1.2Y(8X_1X_2-1) \right\rbrace\right]^{-1},
& U=(X_1,X_2), Z=X_3 .
\end{array}
$$
The unconditional response rates for these three cases are 68\%, 63\% and 52\%, respectively. Because all $\beta_j$, $j=1,2,3$, are nonzero, there is no redundant covariate and $Z_* = Z_S =  Z$.
Since we assume the existence of instrument, there are $2^3-1=7$ candidate instruments.

Table \ref{simu1:selection rate} reports the number of times of selecting each candidate instrument by the proposed validation criterion (\ref{eq:d}). It can be seen that the proposed method can select the instrument $Z_*$ with empirical probability much higher than those for other candidates.  When the sample size is 200 or larger, the probability of correctly selecting $Z_*$ equal or close to 1.

Besides instrument selection, we also consider the estimation of $\theta$ and $E(Y)$ using different instruments and the pseudo likelihood (\ref{likelihood}).
Table \ref{simu1:estimation} reports the empirical means and standard derivations of estimators in case 2 and $N=200$, based on 1000 simulations.
With missing data,  the only data-adaptive estimators are those based on $\hat Z_S = \hat Z_*$, and the rest estimators based on a fixed choice of instrument are entered for comparison.
 In this setting,
the best instrument is $Z_* = (X_2,X_3)$, but both $X_2$ and $X_3$ are correct one-dimensional instruments satisfying \eqref{eq:propensity}-\eqref{eq:propensity1}.
The top part of Table \ref{simu1:estimation} shows the results when a correct instrument  or our proposed $\hat Z_*$ is used; it can be seen that estimators are almost unbiased but the use of $Z_*$ produces the most efficient estimators, and estimators based on the proposed $\hat Z_*$ are almost the same as those based on $Z_*$.
On the other hand, when one of $X_1$, $(X_1,X_2)$, $(X_1,X_3)$, and $(X_1,X_2,X_3)$ is used as an instrument, some estimators are seriously biased because of using a wrong instrument.
The last row of Table \ref{simu1:estimation} reports the results for estimators when there is no missing data, which provides a standard of the best we can do.

\subsection{Simulation Study 2: Instrument, Variable, and Model Selection}
In the second simulation study, we consider the situation where $p(Y|X)$ involves some redundant covariates and we also need to choose a model from three models for (\ref{para}). Hence, this simulation evaluates the performance of our four-step selection procedure proposed in Section 4.

We consider a 10-dimensional covariate $X=(X_1,\dots,X_{10})$, where $(X_1,X_2,X_3)$ and $Y$ are generated the same as those in simulation study 1 in Section 5.1, and $X_4,\dots, X_7$ are generated independently from the standard normal distribution and are redundant, i.e., $p(Y|X)=p(X_1,X_2,X_3)$.
In this experiment, we consider the following three models for (\ref{para}):
\begin{align*}
M_1: Y|X &\sim N\left(\beta_0+\beta_1 X_1 + \cdots + \beta_{10}X_{10}, \, \sigma^2\right),\\
M_2: Y|X &\sim N\left(\beta_0+\gamma_1 X^2_1 + \cdots + \gamma_{10}X^2_{10}, \, \sigma^2\right),\\
M_3: Y|X &\sim N\left(\beta_0+\beta_1 X_1 + \cdots + \beta_{10}X_{10} +\gamma_1 X^2_1 + \cdots + \gamma_{10}X^2_{10},
\, \sigma^2\right),
\end{align*}
where  $M_1$  is a correct model,  $M_2$ is a wrong model, and
$M_3$ is correct but overfitted.
The best model, i.e., the most compact correct model, is  $M_1$ with $\beta_4 = \cdots = \beta_{10}=0$. As we discussed in Section 4, we do not need to consider sub-models of each $M_j$, $j=1,2,3$, because of the variable selection step described in Section 3.

For the propensity, we consider the three cases in Section 5.1, and  an  additional case,
\[
\text{Case 4}:\;P(R=1|Y,X)=\left[1+\exp\left\lbrace  -Y(X_4+0.5) \right\rbrace\right]^{-1}, \quad U=X_4, \, Z= (X_1,X_2,X_3).
\]
The unconditional response rate for case 4 is 69\%.
The reason we consider the additional case 4 is that it is the case where $p(Y|X)$ is a function of $X_*=(X_1,X_2,X_3)$ and $U= X_4$ is not in $X_*$, whereas in all cases 1-3, $U \subset X_* =(X_1,X_2,X_3)$.
 Therefore, case 4 allows us to evaluate step 4 of our procedure in finding the most compact covariate $X_* =
 (X_1,X_2,X_3)$ when $U \not \subset X_*$.

Table \ref{simu2} reports the frequencies in 1000 simulations of correctly selecting  $U$, $Z_*$,  $M_1$, and $X_*$  using the proposed  four-step procedure in Section 4.
It can be seen that, in cases 1-2 and 4, all correct selection probabilities are higher than 90\% and some are close to 1 when $N =200$ and 500, and are between 70\% and 90\% when $N=100$. Case 3 is the most difficult situation for instrument search since $Z_*$ is one-dimensional and $X$ is 10-dimensional; the correct selection probabilities are too low when $N=100$, but are adequate when $N=200$, and are close to 1 when $N=500$.


While Table \ref{simu1:estimation} for simulation study 1 shows some gain of using a 2-dimensional $Z_*$ over using only a part of $Z_*$ in the estimation of $\theta$, we include estimation results in Table \ref{simu2:estimation} for simulation study 2 to illustrate that the gain of using $\hat Z_*$ as instrument over using $\hat Z_S$ that contains $\hat Z_*$ and some redundant covariates.

\section{Real Data Analysis}

\subsection{National Health and Nutrition Examination Survey}
To illustrate  our proposed instrument selection method, we consider a real data set from the National Health and Nutrition Examination Survey (NHANES) conducted in 2005 by the United States Centers for Disease Control and Prevention, which was also analyzed in \citet{Fang2016} from model selection perspective. In this data set, $Y$ is the body fat percentage, which is measured by dualenergy
x-ray absorptiometry and denoted as dxa, and the covariates are age, gender, and body mass index (bmi), i.e., $X=(\text{bmi, age, gender})$. As in  \citet{Fang2016},
we consider  $N=1591$ middle-aged and elderly people, from whom 393 (24.7\%) have missing dxa.

 \citet{Fang2016} assumed that $(\text{age,gender})$ is a given (known) instrument,
based on researcher's experience, and considered model selection with the following four candidate models:
	\begin{align*}
M_1:&\; Y|X \sim N(\beta_0+\beta_1 \text{bmi}+\beta_2 \text{age} + \beta_3 \text{gender},\sigma^2 );\\
M_2:&\; Y|X \sim N(\beta_0+\beta_1 \text{bmi}+\beta_2 \text{age} + \beta_3 \text{gender}+\beta_4 \text{age $\times$ gender},\sigma^2 );\\
M_3:&\; \log Y | X\sim N(\beta_0+\beta_1 \log(\text{bmi})+\beta_2 \text{age} + \beta_3 \text{gender},\sigma^2 );\\
M_4:&\; \log Y|X \sim N(\beta_0+\beta_1 \log(\text{bmi})+\beta_2 \text{age} + \beta_3 \text{gender}+\beta_4 \text{age $\times$ gender},\sigma^2 )
\end{align*}

 We now apply and illustrate our four-step procedure in Section 4, which starts with a search for an instrument.
 Since gender is a binary covariate, as we discussed after procedure (\ref{eq:d}) in Section 2, we need to combine gender with either age or bmi before applying (\ref{eq:d}). Then, we have the following $\tilde p= 2p-2 = 4$ candidates: $\tilde Z_1 = \text{age}$, $\tilde Z_2 = \text{bmi}$, $\tilde Z_3 = (\text{age, gender})$, and $\tilde Z_4 = (\text{bmi, gender})$.

Table \ref{data:NHANES} (Step 1) reports  the $\vc$ value in (\ref{eq:VC}) for each candidate instrument.
Based on \eqref{eq:d} and $\hat Z_S $ from Step 1, Table \ref{data:NHANES} (Step 2) shows $\hat{Z}_*$ obtained under each model $M_j$.
Table \ref{data:NHANES} (Step 3) lists the values of PVC in
 \citet{Fang2016} using $\hat Z_*$ as the instrument.
The smallest PVC$(\hat Z_*)$ corresponds to model $M_1$. Thus, $M_1$ is selected with instrument $\hat Z_*$ = age and $\hat U$ = (bmi, gender).

While $M_1$,  $\hat Z_*$ = age, and $\hat U$= (bmi,gender) are selected after Steps 1-3, we still need to find out whether some or all components of $\hat U$ = (bmi, gender) should be in $M_1$, i.e., what $X_*$ is. Table \ref{data:NHANES} (Step 4)
shows the values of PVC in
\citet{Fang2016}  under four sub-models of $M_1$ with different $X_*$ ($\hat Z_*$ = age must be in $X_*$).  The smallest PVC leads to $\hat X_*$ = (bmi,age,gender) = $X$. The final fitted regression model is given in the last part of Table \ref{data:NHANES}.

\subsection{AIDS Clinical Trials Group Protocol 175 Dataset}
As another example, we apply our variable and instrument selection procedure to the data collected on 2139 HIV positive patients enrolled in AIDS Clinical Trials Group Protocol 175 (ACTG175).
Patients in this HIV clinical trial were randomized into four arms to receive the respective antiretroviral regimen: (1) zidovudine (ZDV) with 532 subjects; (2) didanosine (ddi) with 522 subjects; (3) ZDV + ddi with 524 subjects; and (4) ZDV + zalcitabine with 561 subjects. The response $Y$ is the CD4 cell count at $96\pm 5$ weeks from patients receiving  antiretroviral regimen. The covariate vector $X$ contains 6 components,
$X_1 = \text{age}$, $X_2 = \text{weight}$, $X_3 = \text{CD4}$ cell count at baseline, $X_4 = \text{CD4}$ cell count at $20 \pm 5$ weeks, $X_5 = \text{CD8}$ cell count at baseline, and $X_6 = \text{CD8}$ cell count at $20 \pm 5$ weeks. The purpose of this study is to examine whether the new regimens (2), (3) and (4) work better than the
traditional regimen (1) in the sense that the mean of $Y$ is higher.
We analyze data separately in four arms to estimate $E(Y)$ under each regimen.

Due to  death and dropout, $Y$ has missing values, but $X$ is always fully observed. The missing rate of $Y$ are 39.66\%, 36.21\%, 35.69\% and 37.43\%, respectively, in four arms. We assume the following model in (\ref{para}) for each arm,
\[
Y | X \sim N(\beta_0+\beta_1 X_1 +\beta_2 X_2+\beta_3 X_3+\beta_4 X_4+\beta_5 X_5+\beta_6 X_6,\sigma^2).
\]
To use the pseudo likelihood (\ref{likelihood}),  we also assume that $X$ follows a multivariate normal distribution whose parameters are estimated by maximum likelihood estimates.

Table \ref{data:ACTG} reports the results for four arms. In each arm, we first calculate the VC value (\ref{eq:VC})  under each $X_k$ as a candidate of instrument. For all  four arms, based on \eqref{eq:d}, we obtain that $\hat Z_S=X$ and $\hat U=\varnothing$.
After applying Step 2 of our procedure, we find that $\hat Z_* = (X_3,X_4)$ in arms 1-2 and 4, and  $\hat Z_* = (X_2,X_3,X_4,X_6)$ in arm 3. This means that $X_1 =$ age and $X_5 = $ CD8
cell count at baseline are redundant in all arms, and
$X_2 = $ weight and $X_6 =$ CD8 cell count at $20 \pm 5$ weeks are also redundant in arms 1-2 and 4. Because $X_3 =$ CD4
cell count at baseline and $X_4 =$ CD4 cell count at $20 \pm 5$ weeks are good predictors of $Y =$  CD4 cell count at $96 \pm 5$ weeks, once $X_3$ and $X_4$ are present, other covariates become unimportant or nearly unimportant.
$(X_3,X_4)$ is also an instrument in four arms, and only in arm 3,
a better instrument is $(X_2,X_3,X_4,X_6)$. Since $\hat U = \varnothing$, $\hat X_* = \hat Z_*$, and we do not need Step 4 of our procedure.
The reason why $\hat U = \varnothing$ might be that the latest CD4 cell counts $Y$ is the dominating variable for the propensity.

After instrument and variable selection, we report the estimates and standard deviations of regression coefficients for the components of $X_*$  and $E(Y)$ for each regimen in Table \ref{data:ACTG}.
In all  arms, the estimates for $E(Y)$ by the proposed method are smaller than the corresponding estimates of $E(Y)$ computed based on the assumption that $Y$ values are missing at random. This indicates that the nonresponse of $Y$ may be nonignorable.
	Finally,  the estimates of $E(Y)$ in arms 2-4  are greater than that in arm 1, the traditional regimen, which suggests that the new regimens indeed work better than the traditional regimen.

\section{Conclusion}

In this paper, we have developed a comprehensive methodology for simultaneous instrument search, variable selection, and model specification in the presence of nonignorable nonresponse. By introducing a pseudo-likelihood-based validation criterion, our approach effectively identifies a valid instrument from a set of covariates, a crucial step for ensuring identifiability and obtaining consistent estimators. The proposed method further integrates dimension reduction and parametric model selection, leading to a coherent framework for valid statistical inference. Simulation studies and real-data applications confirm that the procedure performs well in finite samples.

The framework for instrument search established here is general and can be potentially extended to a broader class of statistical models with nonignorable missingness, such as estimating equations \citep{chen2018semiparametric, chen2019semiparametric}, longitudinal or functional regression model \citep{tseng2016longitudinal, chen2018functional, lin2023flexible}, or survival model \citep{Zhang_Ling_Zhang_2024}. The core principle of using a valid instrument offers a versatile framework for tackling nonignorable nonresponse in a wide range of statistical problems.

Beyond its theoretical appeal, the instrument approach provides a practical and robust solution for enhancing the accuracy of statistical estimation and inference in real-world applications involving nonignorable nonresponse. In clinical trials with informative dropout \citep{little2012prevention, ratitch2013missing}, social reporting systems with incomplete data collection \citep{faust2025firearm, nakatsuka2025all}, and registry data with incomplete follow-up \citep{wei2018u, chen2021natural, gottlieb2025work, gottlieb2025differences, openshaw2025effect}, our data-driven instrument selection framework enables researchers to obtain more reliable parameter estimates and valid conclusions. By addressing the critical challenge of identifiability through automated instrument discovery, this approach offers a principled way to handle nonignorable missing data across various scientific domains.

\section*{Appendix: Proofs}

\begin{proof}[Proof of Theorem \ref{lem2}.]
	Let $X=(X_A,V_A)=(X_B,V_B)$, where $X_A$ and $X_B$ are defined in \eqref{eq:1func} and \eqref{eq:2func}. Thus, $V_A$ and $V_B$ are the covariates not involved in $p(Y|X,R=1)$ and $P(R=1|X)$, respectively.
	From \eqref{eq:y given xr} and
	\begin{equation}\label{eq:r given x}
	P(R=1|X)=\int p(y|X)P(R=1|y,U)dy,
	\end{equation}
	$P(R=1|X)$ and $P(Y|X,R=1)$ both involve $X$  through $p(Y|X)$ and $P(R=1|Y,U)$. Since the redundant $V$ could be excluded from both $p(Y|X)$ and $P(R=1|Y,U)$, functions on the right side of \eqref{eq:y given xr} and \eqref{eq:r given x} do not involve $V$. Therefore, $V\subset V_A$ and $V\subset V_B$. Since we also have   $V\subset Z_S$, we conclude that $V\subset V_A\cap V_B\cap Z_S$.	
	Next, \eqref{eq:y given xr} could be re-written as
	\begin{equation}\label{eq:yx}
	p(Y|X)=\dfrac{p(Y|X,R=1)P(R=1|X)}{P(R=1|Y,U)}.
	\end{equation}
	Note that $U$ is the components of $X$ not in $Z_S$. By \eqref{eq:yx}, $V_A\cap V_B\cap Z_S\subset V$.
	Combining the above arguments, we obtain
	\begin{equation}\label{eq:v}
	V=V_A\cap V_B\cap Z_S.
	\end{equation}
	Note that, $Z_*$ contains the components of $Z_S$ not in $V$, and the components of $Z_S$ not in the right side of \eqref{eq:v} form $X_A\cup X_B\cap Z_S$. Therefore, $Z^*=X_A\cup X_B\cap Z_S$.

It remains to prove $X_C = X_A \cup X_B$ with $X_C$ defined in (\ref{eq:subc}). Note that
\begin{align}
P(R\, e^Y=0|X)& =P(R=0|X),\nonumber\\
P(R\, e^Y\leq t |X)&=P(R=1,Y\leq \log t |X)+P(R=0|X)\label{eq:cab}\\
&=P(Y\leq \log t |R=1,X)P(R=1|X)+P(R=0|X)\nonumber
\end{align}
for any $t > 0$.
It follows from (\ref{eq:cab}) and the definition of $X_C$ that $X_C\subset X_A\cup X_B$. On the other hand,
if \eqref{eq:subc} holds, then $P(R\, e^Y=0|X)=P(R\, e^Y=0|X_C)$, which together with the first equation in (\ref{eq:cab}) imply that $P(R=0|X)=P(R=0|X_C)$, i.e., $X_B\subset X_C$. This, coupled with \eqref{eq:cab} and the fact that \eqref{eq:cab} holds with $X$ replaced by $X_C$, imply that $P(Y\leq y|R=1,X)=P(Y\leq y|R=1,X_C)$ for any $y \in (-\infty , \infty )$. Therefore, $X_A\subset X_C$ and $X_A\cup X_B\subset X_C$. This completes the proof.
\end{proof}

\begin{proof}[Proof of Theorem \ref{thm:all}.]
The result follows directly from Theorems in \citet{chen2021instrument} and Theorems \ref{lem2}.
\end{proof}

\bibliographystyle{apalike}
\bibliography{InstrumentSelection}

\newpage

\begin{table}[tbp]
	\centering
	\caption{Number of times validation criterion (\ref{eq:d}) selects each instrument in 1000 replications in simulation study 1\label{simu1:selection rate}}\vspace{2mm}
	\begin{tabular}{ccccccccccc}
		\hline
		&		&& &\multicolumn{7}{c}{Selected instrument by (\ref{eq:d})}\\
		Case &$Z_*$ & 	$N$&&$X_1$ & $X_2$ & $X_3$ &  $(X_1,X_2)$ & $(X_1,X_3)$ & $(X_2,X_3)$& $(X_1,X_2,X_3)$ \\ \cline{5-11}
			1&$(X_1,X_2,X_3)$ &   	100 &&  4  &   9  &   7  &   3  &   4 &    8  & 965 \\
	&& 200 &   &    6  &   6 &    4 &    7&     2  &   0 &  975 \\
&&   	500& &2  &  2  &  1  &  3  &  2  &  4  &  986 \\
	2&$(X_2,X_3)$  &    100 &&        0 &   13  &  16  &   0  &   0 &  970   &  1 \\
		&		 & 200&   &   0 &    9  &   6  &   0  &   0  & 985 &    0 \\
		&		&500   &   &0  &  0  &  0  &  0  &  0  &  $\!\! 1000  $&  0\\
 3&$X_3$ &  100 &    &4  &   1  & 707  &   0 &   54 &   55 &  179\\
		&		&200 &      &0  &   0  & 949  &   0   &  8  &   5  &  38 \\
		&	    &500 &   &0  &  0  & $\!\! 1000$  &  0  &  0  &  0  &  0  \\
		\hline
	\end{tabular}
\end{table}

\begin{table}[htbp]
	\centering
	\caption{Mean and standard deviation (in parentheses) of estimators using different $Z$'s  in Case 2, $N=200$, based on 1000 replications in simulation study 1\label{simu1:estimation}} \vspace{2mm}
	\begin{tabular}{ccccccc}
		\hline
		& \multicolumn{6}{c}{Parameter} \\
		$Z$ used	& $\beta_0=5$&   $\beta_1=2$  &   $\beta_2=2$  &  $\beta_3=2$  &  $\sigma^2=3$ & $E(Y)=11$  \\
		\cline{2-7}
		$X_2$               &             4.99 (0.43)  &   2.01 (0.16) &    2.01 (0.16)   &   2.02 (0.19)  &    2.91 (0.52)   &  11.01 (0.41)         \\
		$X_3$              &           5.00 (0.43)   &   2.00 (0.16)  &    2.02 (0.18)   &   2.01 (0.16)  &    2.92 (0.54)  &  11.01 (0.42)         \\
		$(X_2,X_3)$          &            5.00 (0.35)   &   2.00 (0.13) &    2.00 (0.13)   &   2.00 (0.13)  &    2.91 (0.44)   &  10.99 (0.39)       \\
		$\hat Z_S = \hat Z_*$      &            5.00 (0.35)   &   2.00 (0.13)  &    2.00 (0.14)   &   2.00 (0.13)  &    2.91 (0.44)   &  10.99 (0.39)         \\
		$X_1$                &             3.35 (0.40)   &   2.50 (0.17)  &    2.00 (0.14)  &   2.00 (0.14)  &    3.82 (0.63)   &   \ 9.82 (0.48)         \\
		$(X_1,X_2)$           &            4.10 (0.34)   &   2.23 (0.13)  &    2.13 (0.14)   &   2.00 (0.14)  &    3.29 (0.48)   &  10.44 (0.44)          \\
		$(X_1,X_3)$           &            4.10 (0.34)  &   2.23 (0.13)  &    2.00 (0.13)   &   2.13 (0.14)  &    3.29 (0.48)  &  10.44 (0.44)         \\
		$(X_1,X_2,X_3)$     &            4.41 (0.32)   &   2.13 (0.12)  &    2.07 (0.13)   &   2.07 (0.13)  &    3.11 (0.43)   &  10.66 (0.42)
		\\
		No missing              &             4.99 (0.19)   &   2.00 (0.09)  &    2.00 (0.09)  &   2.00 (0.09) &    2.93 (0.30)  &  10.98 (0.36)               \\
		\hline	
	\end{tabular}
\end{table}

\begin{table}[htbp]
	\centering
	\caption{Frequency of correct selection in 1000 replications at each step in simulation study 2\label{simu2}}\vspace{2mm}
	\begin{tabular}{cccccccccc}
		\hline
	Case & $U$ &$Z_*$ & $X_*$ & $N$	&&Step 1 & Step 2 & Step 3 & Step 4 \\
 	\cline{7-10}
	1 & $\varnothing$ & $(X_1,X_2,X_3) $ &$(X_1,X_2,X_3)$& 100	&&	913   & 892 & 884  & 884\\
&&&& 200 &&	928 & 912 & 912 & 912 \\
&&&&	500 &&939    & 927    &927 & 927 \\
2&  $X_1$ &$(X_2,X_3)$ & $(X_1,X_2,X_3)$&100 && 959   & 938  &855  &   845 \\
&&&& 200 &&  982 & 975   & 936  &  936  \\
&&&&500 & & $ \!\!  1000 $   & 998    & 998   &   998        \\
3&  $(X_1,X_2)$&$X_3$ & $(X_1,X_2,X_3)$& 100& & 392 &369   & 317  &   272   \\
&&&& 200 &   &868 & 848  & 789  &  767    \\
&&&&500&  & 998 &  993     & 985  &     985      \\
	4& $X_4$ &$(X_1,X_2,X_3)$ & $(X_1,X_2,X_3)$& 100 &&832 & 785   & 714  &    713       \\
&&&& 200 && 985 & 973  & 952  &  952 \\
&&&&500 &    &999    &  992    & 992 &    992   \\
		\hline
\multicolumn{8}{l}{	{\small 	Step 1: correctly selecting $U$ under model $M_1$;}} \vspace{-2mm} \\
		\multicolumn{8}{l}{\small Step 2: correctly selecting $U$ and $Z_*$ under model $M_1$;} \vspace{-2mm}\\
		\multicolumn{8}{l}{\small Step 3: correctly selecting $U$, $Z_*$ and $M_1$;} \vspace{-2mm}\\
		\multicolumn{8}{l}{\small Step 4: correctly selecting $U$, $Z_*$, $M_1$ and $X_*$. } \\
	\end{tabular}
\end{table}

\begin{table}[htbp]
	\centering
	\caption{Mean and standard deviation (in parentheses) of estimators using different $Z$'s, based on $N=200$ and 1000 replications in simulation study 2\label{simu2:estimation}} \vspace{2mm}
	\begin{tabular}{cccccccc}
		\hline
	&	& \multicolumn{6}{c}{Parameter} \\
Case &	$Z$ used	& $\beta_0=5$&   $\beta_1=2$  &   $\beta_2=2$  &  $\beta_3=2$  &  $\sigma^2=3$ & $E(Y)=11$  \\
\cline{3-8}
1 &		$\hat Z_S$ &    5.07 (0.35) &   1.99 (0.13) &    1.99 (0.14) &    1.99 (0.14) &    2.76 (0.42) &   11.03 (0.40)\\
&		$\hat Z_*$ &    5.04 (0.32) &    1.99 (0.10) &    1.99 (0.10) &    1.99 (0.11)&    2.90 (0.40) &   11.01 (0.38) \\
2 &		$\hat Z_S$ &        5.00  (0.38)&  2.01 (0.16) &  2.00 (0.15) &  2.01 (0.14) &  2.75 (0.45)&  11.00 (0.37)\\
&		$\hat Z_*$ & 5.00  (0.34)&  2.01 (0.13) &  2.00 (0.13) &  2.01 (0.12)&   2.90 (0.44) &  11.00 (0.35)\\
3 &		$\hat Z_S$ &      4.95 (0.55) &    2.02 (0.22) &    2.02 (0.21) &    2.02 (0.22) &    2.73 (0.65) &   11.00 (0.45)  \\
&		$\hat Z_*$ &    4.98 (0.51) &  2.01 (0.18) &   2.01 (0.17)  &  2.02 (0.18) &   2.89 (0.60) &  11.01 (0.42) \\
4 &		$\hat Z_S$ &       4.98 (0.37) &   2.00 (0.14) &  2.00 (0.15) &    2.00 (0.14) &    2.78 (0.43) &   10.99 (0.42) \\
&		$\hat Z_*$ &    5.00 (0.29)  &  2.00 (0.12)  &  2.00 (0.12) &    2.00 (0.12)  &    2.91 (0.39)  &   11.00 (0.37) \\ \hline
	\end{tabular}
\end{table}

\begin{table}
	\centering
	\caption{Instrument, variable, and model selection results in NHANES dataset\label{data:NHANES}} \vspace{2mm}
		\begin{tabular}{ccccccc}
			\hline
			& \multicolumn{4}{c}{Step 1: $\vc$ in (\ref{eq:VC})}& Step 2 & Step 3\\
			Model 	& bmi & age & (age, gender) & (bmi, gender) & \underline{$~~~ \hat Z_* ~~~$} & \underline{$\text{PVC}(\hat Z_*)$}\\
			\cline{2-5}
			$M_1$ & 0.0068 & 0.0020 & 0.0081 & 0.0180 & age& 0.0065\\
			$M_2$ & 0.0067 & 0.0020 & 0.0073 & 0.0178 & age&0.0071\\
			$M_3$ & 0.0100 & 0.0020 & 0.0025 & 0.0094 &  (age, gender)&0.0070\\
			$M_4$ & 0.0100 & 0.0024 & 0.0024 & 0.0094 &  (age, gender)&0.0074\\ \hline
\multicolumn{7}{c}{\underline{Step 4: selection of $X_*$  after $M_1$, $\hat Z_* $ = age, $\hat U $ = (bmi,gender) are selected}} \\
 \multicolumn{2}{c}{$X_*$} \vspace{1mm} & age & (bmi,age) & (gender,age) & (bmi,gender,age) & \\
  \multicolumn{2}{c}{PVC$(\hat Z_*)$}& 0.0108 & 0.0116 & 0.0116 & 0.0065 &  \\ \hline
  \multicolumn{7}{c}{\underline{Final model fitting after $M_1$, $\hat Z_* $ = age, $\hat U $ = (bmi,gender), $\hat X_*$ = (bmi,age,gender) are selected} }\\
\multicolumn{2}{c}{Parameter} & $\beta_0$: intercept& $\beta_1$: bmi & $\beta_2$: age & $\beta_3$: gender \\
\multicolumn{2}{c}{Estimate  (SE)} &
7.61 (4.90) & 0.75 (0.09) & -11.62 (0.87) & 0.21 (0.08) \\
  \hline
 \multicolumn{7}{l}{\small $\text{PVC}(\hat Z_*)$: PVC in \citet{Fang2016} with selected $\hat Z_*$ as instrument.}\\
		\end{tabular}
\end{table}

\begin{table}[htbp]
	\centering
	\caption{Instrument and variable selection and estimation of $\theta$ and $E(Y)$ in ACTG175 dataset\label{data:ACTG}} \vspace{2mm}
		\begin{tabular}{ccccccccccc}
			\hline
Arm	&	& $X_1$ & $X_2$ & $X_3$ & $X_4$ & $X_5$ & $X_6$ & Intercept &  $E(Y)$ & MAR \\
			\cline{3-11}
	1 &		$\vc $ & 0.0068 & 0.0105  &  0.0089 &   0.0085 &   0.0062 &   0.0195 & \\
	&		Est &  0&  0& 0.431   &   0.607 &   0&  0&   104.15    &     252.04 & 287.62 \\
	&		SE&     & &    0.098 &     0.097  & &&   34.40 &      10.722  & 9.287\\
	&	&	\multicolumn{9}{l}{Variable and instrument selected: $\hat U=\varnothing, \hat Z_*= \hat X_* = (X_3,X_4)$}\\
			\cline{3-11}
2 &			$\vc$ & 0.0074  &  0.0076  &  0.0092  &  0.0036  &  0.0030   & 0.0076 & \\
&			Est    &    0  &  0&    0.285 &     0.685 &  0&  0&     -39.22  &    336.20 & 341.25  \\
	&		SE&    &  & 0.071  &  0.070 & &   & 31.40 &     8.603 & 9.512\\
&	&		\multicolumn{9}{l}{Variable and instrument selected: $\hat U=\varnothing$, $\hat Z_* = \hat X_* =(X_3,X_4)$}\\
		\cline{3-11}
3 &			$\vc$ & 0.0046  &  0.0068   & 0.0095  &  0.0099 &  0.0149 &  0.0111 & \\
	&		Est   &   0 &   1.239  &     0.491  &     0.645  &  0  &-0.083  &    -106.66  &   324.56  &354.85 \\
		&	SE&     &    0.698   &  0.098 &     0.116  & &    0.026  &   74.71  &     10.553  &9.441\\
&	&		\multicolumn{9}{l}{Variable and instrument selected:  $\hat U=\varnothing$, $\hat Z_*=\hat X_* =(X_2,X_3,X_4,X_6)$}\\
			\cline{3-11}
	4 &		$\vc$ & 0.0060  & 0.0058  & 0.0069&   0.0088  &  0.0147&   0.0070 & \\
		&	Est   &  0  &0&   0.365 &     0.735 &0&0&   -82.38  &       319.38 &328.82  \\
		&	SE&  &  &   0.073  &  0.076 & &  & 27.72  &   7.460 & 9.512\\
	&	&		\multicolumn{9}{l}{Variable and instrument selected: $\hat U=\varnothing$, $\hat Z_* = \hat X_* =(X_3,X_4)$} \\ \hline
\multicolumn{11}{l}{\small Est: estimated regression coefficient or $E(Y)$} \\
\multicolumn{11}{l}{\small SE: standard error by bootstrap}\\
\multicolumn{11}{l}{\small MAR: estimated $E(Y)$ under missing at random}\\
	\end{tabular}
\end{table}

\end{document}